\pgfplotsset{compat=1.8}
\tikzset{outline/.style args={#1}{%
  draw=#1,thick,fill=#1!50}}
\title{Scenario-Based Verification of Uncertain MDPs
\thanks{Supported by the grants ARL \# ACC-APG-RTP W911NF, NASA \# 80NSSC19K0209,  
	NSF \# 1646522, and NSF \# 1652113.}
}
\author{Murat Cubuktepe\inst{1}, Nils Jansen\inst{2}, Sebastian Junges\inst{3},\\ Joost-Pieter Katoen\inst{3}, Ufuk Topcu\inst{1}}
\institute{The University of Texas at Austin, Austin, USA \and Radboud University Nijmegen, Nijmegen, The Netherlands \and RWTH Aachen University, Aachen, Germany}
\authorrunning{M.\ Cubuktepe et al.}
\newcommand{\ie}{i.e.\@\xspace}
\newcommand{\eg}{e.g.\@\xspace}
\newcommand{\tool}[1]{\textrm{#1}\xspace}
\newcommand{\subjectto}{\textnormal{subject to}}
\newcommand{\minimize}{\textnormal{minimize}}
\newcolumntype{C}{>{$\displaystyle}c<{$}}
\useunder{\uline}{\ul}{}
\newcommand{\PP}{\mathbb{P}}
\newcommand{\RR}{\mathbb{R}}
\newtheorem{thm}{Theorem}
\newcommand{\dtmc}{\mathcal{D}}
\newcommand{\p}{\ensuremath{\mathbf{P}}}
\newcommand{\probdist}{\ensuremath{\mathbb{P}}}
\newcommand{\reachPropg}[2]{\ensuremath{\p_{\leq #1}(\finally #2)}}
\newcommand{\satprob}{F(\vmdp,\varphi)}
\newcommand{\satreachprob}{F(\vmdp,\varphi_r)}
\newcommand{\satreachprobmc}{F(\vmc,\varphi_r)}
\newcommand{\satcostprob}{F(\vmdp,\varphi_c)}
\newcommand{\satcostprobmc}{F(\vmdp,\varphi_c)}
\newcommand{\falseprob}{F(\vmdp,\neg\varphi)}
\newcommand{\reachPropgT}{\ensuremath{\reachPropg{\lambda}{T}}}
\newcommand{\reachPropSymbol}{\varphi_r}
\newcommand{\ereachPropSymbol}{\varphi_c}
\newcommand{\er}{\ensuremath{\mathrm{EC}}}
\newcommand{\expRewProp}[2]{\ensuremath{\er_{\leq #1}(\finally #2)}}
\newcommand{\finally}{\lozenge}
\newcommand{\R}{\mathbb{R}}
\newcommand{\Q}{\mathbb{Q}}
\newcommand{\Ireal}{[0,\, 1]\subseteq\mathbb{R}}  
\newcommand{\Distr}{\mathit{Distr}}
\newcommand{\distDom}{X}
\newcommand{\distFunc}{\mu}
\newcommand{\distDomElem}{x}
\newcommand{\Paramvar}{\ensuremath{{V}}\xspace}        
\newcommand{\sinit}{s_{\mathit{I}}} 
\newcommand{\mdp}{\mathcal{M}}
\newcommand{\pmdp}{\mdp}
\newcommand{\vmdp}{\mdp_{\probdist}}
\newcommand{\vmc}{\dtmc_{\probdist}}
\newcommand{\vdtmc}{\vmc}
\newcommand{\probmdp}{\mathcal{P}}
\newcommand{\paramspace}[1][]{\ensuremath{\mathcal{V}_{#1}}}
\newcommand{\costspace}[1][]{\ensuremath{\mathcal{W}_{#1}}}
\newcommand{\sched}{\ensuremath{\sigma}}
\newcommand{\Sched}{\ensuremath{\mathit{Str}}}
\newcommand{\Act}{\ensuremath{\mathit{Act}}}
\newcommand{\ActS}{\ensuremath{\mathit{ActS}}}
\newcommand{\act}{\ensuremath{\alpha}}
\DeclareMathAlphabet{\mathpzc}{OT1}{pzc}{m}{it}
\def\presuper#1#2%
\begin{document}
\maketitle
\begin{abstract}
We consider Markov decision processes (MDPs) in which the transition probabilities and rewards belong to an uncertainty set parametrized by a collection of random variables.
The probability distributions for these random parameters are unknown.
The problem is to compute the probability to satisfy a temporal logic specification within  any MDP that corresponds to a sample from these unknown distributions.
In general, this problem is undecidable, and we resort to techniques from so-called scenario optimization. 
Based on a finite number of samples of the uncertain parameters, each of which induces an MDP, the proposed method estimates the probability of satisfying the specification by solving a finite-dimensional convex optimization problem. 
The number of samples required to obtain a high confidence on this estimate is independent from the number of states and the number of random parameters. 
Experiments on a large set of benchmarks show that a few thousand samples suffice to obtain high-quality confidence bounds with a high probability.
\keywords{MDP, Uncertainty, Verification, Scenario optimization}
\end{abstract}

\section{Introduction}
\label{sec:introduction}


\paragraph{MDPs.} Markov decision processes (MDPs) model sequential decision-making problems in stochastic dynamic environments~\cite{puterman2014markov}. 
They are widely used in areas like planning~\cite{russell2016artificial}, reinforcement learning~\cite{sutton2018reinforcement}, formal verification~\cite{BK08}, and robotics~\cite{mcallister2012motion}.
Mature model checking tools like \tool{PRISM}~\cite{KNP11} and \tool{Storm}~\cite{DBLP:conf/cav/DehnertJK017} employ efficient algorithms to verify the correctness of MDPs against temporal logic specifications~\cite{Pnu77} provided all transition probabilities and cost functions are exactly known. 
In many applications, however, this assumption may be unrealistic, as certain system parameters are typically not exactly known and under control by external sources.

\paragraph{Uncertain MDPs.}
A common approach to deal with unknown system parameters is to let transition probabilities and cost functions of an MDP belong to uncertainty sets, resulting in so-called \emph{uncertain} MDPs~\cite{DBLP:conf/cdc/WolffTM12,nilim2005robust,wiesemann2013robust}, which generalize \emph{interval} MDPs~\cite{DBLP:conf/lata/DelahayeLLPW11,puggelli2013polynomial,givan2000bounded}. 
However, solution approaches, e.g., in~\cite{DBLP:conf/cdc/WolffTM12,nilim2005robust,wiesemann2013robust}, usually rely on the potentially limiting assumption that the uncertainty sets at different states of the MDP are independent from each other. 

Consider a simple motion planning scenario where an unmanned aerial vehicle (UAV) is tasked to transport a certain payload to a target location. 
The problem is to compute a policy for the UAV to successfully deliver the payload while taking into account the weather conditions. 
External factors like wind strength or direction may affect the movement of the UAV.
The assumption that such weather conditions are independent between the different possible states of UAV is unrealistic, and does not adequately model the scenario at hand. 

For settings in which the uncertainties at different states depend on each other, an option is to account for all possible--albeit infinitely many--values in the uncertainty sets. 
The policy synthesis problem can be formulated as a so-called semi-infinite convex optimization problem, which includes finitely many variables but infinitely many constraints~\cite{wiesemann2013robust}. 
This problem, however, is NP-hard~\cite{wiesemann2013robust,bertsimas2011theory}. 
Furthermore, it fails to exploit additional information that may be available as random variables over the uncertainty sets~\cite{DBLP:conf/valuetools/Scheftelowitsch17}, and may be very conservative. 
For instance, weather-data in the form of probability distributions may provide additional information on potential changes during the mission.

In this paper, we study a setting in which the fact the uncertain parameters are random variables and the dependencies between them are accounted for explicitly. 
Furthermore, each random parameter follows an unknown probability distribution from which we can sample the parameter values.
\begin{mdframed}[backgroundcolor=gray!30]
\emph{Problem statement.} Compute the probability with which there exists a policy such that a reachability or an expected-cost specification is satisfied for any randomly drawn parameter value.
\end{mdframed}%
We call this probability the \emph{satisfaction probability}.
The intuition is that the question of whether all (or some) parameter values satisfy a specification---as is often done in parameter synthesis~\cite{DBLP:journals/corr/abs-1903-07993}---is replaced by the question of \emph{how much we expect the (sampled) model to satisfy a specification}.
For example, a satisfaction probability of $80\%$ tells that, if we randomly sample the parameters, with a probability of $80\%$ there exists a policy for the resulting MDP that satisfies the specification.
Computing the satisfaction probability is in general undecidable, even for known probability distributions over the parameter values~\cite{arming2018parameter}.

%


\paragraph{Scenario-based verification.}
Therefore, we resort to \emph{sampling-based} algorithms that yield a confidence (probability) on the bounds of the satisfaction probability.
Referring back to the UAV example, we want to compute a confidence probability in the probability that there exists a policy for the UAV to successfully finish the mission.
As a first step, we take the aforementioned semi-infinite optimization problem that accounts for all possible parameter values as a basis.
Each concrete parameter value is referred to as a \emph{scenario} in the convex optimization literature ~\cite{DBLP:journals/tac/CalafioreC06}.
For specific problems where a distribution over individual scenarios is present, a technique called \emph{scenario-based optimization} provides guarantees on the satisfaction probability via efficient sampling techniques~\cite{DBLP:journals/tac/CalafioreC06,DBLP:journals/siamjo/CampiG08}.
The basic idea is to consider a finite set of samples from the distribution over the scenarios and restrict the semi-infinite problem to these samples.
The resulting convex optimization problem with finitely many constraints can be solved efficiently~\cite{boyd_convex_optimization}.

For our setting, we first sample a finite number of parameter instantiations each of which induces a concrete MDP.
We can solve the synthesis problem for this MDP efficiently using, \eg, a probabilistic model checker.
Based on the results, we compute a satisfaction probability and an estimate of its potential error.
For example, a $90\%$ estimate in a satisfaction probability of $80\%$, means that the error is at most $10\%$. 
We show that the error in the estimate diminishes to zero exponentially rapidly with increasing number of samples.
Moreover, we show that the number of required samples does neither depend on the size of the state space nor the number of random parameters.
We validate the theoretical results using several MDPs that have different sizes of state and parameter spaces and demonstrate experimentally that the required number of samples is indeed not sensitive to the dimension of the state and parameter space.
In addition, we show the effectiveness of our method with a new dedicated case study based on the aforementioned UAV example which incorporates 2\,500 random parameters.


\paragraph{Related work.}
The so-called~\emph{parameter synthesis} problem is concerned with computing parameter values such that there exists a policy in the induced non-parametric MDP that satisfies the specifications.
Most of the work in parameter synthesis focus on finding one parameter value that satisfies the specification.
The approaches involve computing a rational function of the reachability probabilities~\cite{Daw04,param_sttt,DBLP:journals/corr/abs-1804-01872}, utilizing convex optimization~\cite{DBLP:conf/tacas/Cubuktepe0JKPPT17,atvaqcqp}, and sampling-based methods~\cite{chen2013model,meedeniya2014evaluating}.
The problem of whether there exists a value in the parameter space that satisfies a reachability specification is ETR-complete\footnote{The ETR satisfiability problem is to decide if there exists a satisfying assignment to the real variables in a Boolean combination of a set of polynomial inequalities. 
It is known that NP $\subseteq$ ETR $\subseteq$ PSPACE.}~\cite{winkler2019complexity}, and finding a satisfying parameter value is exponential in the number of parameters.

The work in~\cite{bacci2019model} considers the analysis of Markov models in the presence of uncertain rewards, utilizing statistical methods to reason about the probability of a parametric MDP satisfying an expected cost specification.
This approach is restricted to reward parameters and does not explicitly compute confidence bounds.
\cite{DBLP:conf/sigsoft/LlerenaBBSR18}~computes bounds on the long-run probability of satisfying a specification with probabilistic uncertainty for Markov chains.
Other related techniques include multi-objective model checking to maximize the average performance with probabilistic uncertainty sets~\cite{DBLP:conf/valuetools/Scheftelowitsch17}, sampling-based methods which minimize the \emph{regret} with uncertainty sets~\cite{DBLP:journals/jair/AhmedVLAJ17}, and Bayesian reasoning to compute parameter values that satisfy a metric temporal logic specification on a continuous-time Markov chain~\cite{DBLP:conf/tacas/BortolussiS18}.
\cite{arming2018parameter} considers a variant of the problem in this paper where the probability distribution of the uncertainty sets is assumed to be known.
The paper formulates the policy synthesis problem as an (undecidable~\cite{chatterjee2016decidable}) partially observable Markov decision process (POMDP) synthesis problem and use off-the-shelf point-based POMDP methods~\cite{pineau2003point,cassandra1997incremental}.
%
%
The work in~\cite{puggelli2013polynomial,DBLP:conf/cdc/WolffTM12} consider the verification of MDPs with convex uncertainties. 
However, the uncertainty sets for different states in an MDP are restricted to be independent, which does not hold in our problem setting where we have parameter dependencies.

Uncertainties in MDPs have received quite some attention in the artificial intelligence and planning literatures. Interval MDPs~\cite{puggelli2013polynomial,givan2000bounded} use probability intervals in the transition probabilities. Dynamic programming, robust value iteration and robust policy iteration have been developed for MDPs with uncertain transition probabilities whose parameters are statistically independent, also referred to as rectangular, to find a policy ensuring the highest expected total reward at a given confidence level~\cite{nilim2005robust,DBLP:conf/cdc/WolffTM12}. The work in~\cite{wiesemann2013robust} relaxes this independence assumption a bit and determines a policy that satisfies a given performance with a pre-defined confidence provided an observation history of the MDP is given by using conic programming. State-of-the art exact methods can handle models of up to a few hundred of states~\cite{ho2018fast}. Multi-model MDPs~\cite{steimle2018multi} treat distributions over probability and cost parameters and aim at finding a single policy maximizing a weighted value function. For deterministic policies this problem is NP-hard, and it is PSPACE-hard for history-dependent policies.

\section{Preliminaries}


A \emph{probability distribution} over a finite set $\distDom$ is a function $\distFunc\colon\distDom\rightarrow\Ireal$ with $\sum_{\distDomElem\in\distDom}\distFunc(\distDomElem)=1$. 
The set of all distributions on $\distDom$ is denoted by $\Distr(\distDom)$. Let $\Paramvar= \lbrace x_1,\ldots,x_n\rbrace$ be a finite set of \emph{parameters} over $\mathbb{R}^n$. The set of polynomials over $\Paramvar$ is denoted by $\mathbb{Q}[\Paramvar].$ 
We denote the cardinality of a set $\mathcal{U}$ by $\vert \mathcal{U} \vert$. 

\subsection{Parametric Models}
\begin{definition}[pMDP]
	A \emph{parametric Markov decision process} (pMDP) $\pmdp$  is a tuple $\pmdp = (S, \Act, \sinit, \Paramvar, \mathcal{P})$
	with a finite set $S$ of \emph{states}, a finite set $\Act$ of \emph{actions}, an \emph{initial state} $\sinit \in S$, a finite set $\Paramvar$ of real-valued variables \emph{(parameters)} and a \emph{transition function} $\probmdp \colon S \times \Act \times S \rightarrow \mathbb{Q}[\Paramvar]$.
\end{definition}


%
%
%
For $s \in S$,  $\ActS(s) = \{\act \in \Act \mid \exists s'\in S,\,\probmdp(s,\,\act,\,s') \neq 0\}$ is the set of \emph{enabled} actions at $s$.
Without loss of generality, we require $\ActS(s) \neq \emptyset$ for $s\in S$.
If $|\ActS(s)| = 1$ for all $s \in S$, $\mdp$ is a \emph{parametric discrete-time Markov chain (pMC)}. 
We denote the transition function for pMCs by $\probmdp(s,s')$.

A pMDP $\mdp$ is a \emph{Markov decision process (MDP)} if the transition function yields \emph{well-defined} probability distributions, \ie, $\probmdp \colon S \times \Act \times S \rightarrow [0,1]$ and $\sum_{s'\in S}\probmdp(s,\act,s') = 1$ for all $s \in S$ and $\act \in \ActS(s)$.
We denote the \emph{parameter space of $\mdp$} by $\paramspace[\mdp]$.
Applying an \emph{instantiation} $u \in \paramspace[\pmdp]$ to a pMDP $\mdp$ yields the \emph{instantiated} MDP $\mdp[u]$ by replacing each $f\in\mathbb{Q}[\Paramvar]$ in $\mdp$ by $f[u]$. 
An instantiation $u$ is \emph{well-defined} for $\mdp$ if the resulting model $\mdp[u]$ is an MDP.
We assume that all parameter instantiations in $\paramspace[\mdp]$ yield well-defined MDPs.
We call $u$ \emph{graph-preserving} if for all $s,s'\in S$ and $\act\in \Act$ it holds that $\probmdp(s,\act,s')\neq 0 \Rightarrow \probmdp(s,\act,s')[u]\in(0,1]$.
%
If $\probmdp(s,\act,s') \in \{ p, 1-p \mid p \in \Paramvar \} \cup \Q$, then the parameter space $\paramspace[\mdp]$ is given by the rectangle $[0,1]^{|\Paramvar|}$.
We also consider a state-action cost function $c \colon  S \times Act \rightarrow \mathbb{Q}[\Paramvar]$. 
We denote the set of cost parameters as $\costspace$.

To define measures on
MDPs, nondeterministic choices are resolved by a so-called \emph{policy} $\sched\colon S\rightarrow\Act$ with $\sched(s) \in \ActS(s)$.
The set of all policies over $\mdp$ is $\Sched^\mdp$.
%
%
%
For the specifications that we consider in this paper, memoryless deterministic policies are sufficient~\cite{BK08}.
Applying a policy to an MDP yields an \emph{induced Markov chain} where all nondeterminism is resolved.
%

For an MC $\dtmc$, the \emph{reachability specification} $\reachPropSymbol=\reachPropgT$ asserts that a set $T \subseteq S$ of \emph{target states} is reached with probability at most $\lambda\in [0,1]$\footnote{The theory also applies to lower bounded properties.}.
If $\reachPropSymbol$ holds for $\dtmc$, we write $\dtmc\models\reachPropSymbol$.
Model checking for the more general PCTL~\cite{hansson_pctl} or $\omega$-regular specifications is often reducible to checking reachability specifications~\cite{BK08}. For an MDP $\mdp$, $\reachPropSymbol$ holds if for all $\sched\in \Sched^\mdp$ such that the induced MC $\dtmc$ by the policy $\sched$ reaches the set $T$ with a probability of at most $\lambda$. 
For an \emph{expected cost specification} $\ereachPropSymbol=\expRewProp{\kappa}{G}$, it holds that $\dtmc\models\ereachPropSymbol$ if and only if the expected cost of reaching a set $G \subseteq S$ is at most $\kappa \in \R$. 
The expected cost of reaching $G$ is well-defined if and only if $\mathbf{P}(\finally T)=1$ for all policies in an MDP.



\subsection{Uncertain MDPs}
We now introduce the setting that we study in this paper. 
Specifically, we use parameters to define the uncertainty in the transition probabilities and cost functions of an MDP. 
Each random parameter follows an unknown probability distribution from which we can sample the parameter values. 
%
\begin{definition}[uMDP]
	An \emph{uncertain Markov decision process} $\vmdp$ (uMDP) is a tuple $\vmdp = \left(\pmdp,   \probdist \right)$ where $\pmdp$ is a pMDP, and  $\probdist$ is a probability distribution over the parameter space $\paramspace[\pmdp]$.
	If $\pmdp$ is a pMC, then we call $\vmdp$ a uMC.
\end{definition}
Intuitively, a uMDP is a pMDP with an associated distribution over possible (graph-preserving) parameter instantiations.
That is, a realization of $\probdist$ yields a concrete MDP $\pmdp[u]$ with the respective instantiation $u\in\paramspace[\pmdp]$ (and $\probdist(u)>0$). 
\begin{remark}
In a uMDP, we distinguish \emph{controllable} and \emph{uncontrollable} parameters.
The uncontrollable parameters follow the probability distribution $\probdist$. 
In contrast, we can actively \emph{instantiate} the controllable parameters. In the following, we specifically allow cost parameters to be controllable. 
\end{remark}
\begin{definition}[Satisfaction Probability]
Let $\vmdp = \left(\pmdp, \probdist\right)$ be a uMDP and $\varphi$ a specification.
	The (weighted) \emph{satisfaction probability of $\varphi$} is  
	\[ \satprob= \int_{\paramspace[\pmdp]}   I_\varphi(u)\; d\,\probdist(u)\]
	with $u\in \paramspace[\pmdp]$ and $I_\varphi \colon \paramspace[\pmdp] \rightarrow \{0,1\}$ is the indicator for $\varphi$, i.e.\ $I_\varphi(u) = 1$ iff 
$\pmdp[u] \models \varphi$. 
\end{definition} 
Note that $I_\varphi$ is measurable, as $\paramspace[\pmdp]$ is the finite union of semi-algebraic sets~\cite{Book_Basu_RAAlgorithms}. 
Moreover, we have that $\satprob \in [0,1]$ and $\satprob + \falseprob= 1$.

\begin{example}
Consider the uMC in the left figure of Fig.~\ref{fig:exampleparam} with the uncontrollable parameter set $V=\{v\}$, initial state $s_0$, target set $T = \{s_3\}$ and an uniform distribution for the parameter $v$ over the interval $[0,1]$. We plot the probability of satisfying the specification $\reachPropSymbol=\reachPropgT$ as a function of $v$ in the right figure of Fig.~\ref{fig:exampleparam}. We also show the satisfying region and its complementary as green and red regions. The satisfying region is given by the union of the intervals $\left[0.13, 0.525\right]$ and $\left[0.89,1.0\right]$, and the satisfaction probability $\satreachprob$ is $0.395+0.11=0.505$.
\end{example} 

\begin{figure}[t]
\centering
\scalebox{0.9}{
\begin{tikzpicture}[scale=0.4, nodestyle/.style={draw,circle},baseline=(s6)]
    \node [nodestyle,initial] (s0) at (0,0) {$s_0$};
    \node [nodestyle] (s1) [on grid, right=1.8cm of s0] {$s_1$};
    \node [nodestyle] (s2) [on grid, right=1.8cm of s1] {$s_2$};
    \node [nodestyle ,accepting] (s3) [on grid, right=1.5cm of s2] {$s_3$};
    \node [nodestyle] (s4) [on grid, below=0.9cm of s1] {$s_4$};
        \node [nodestyle] (s5) [on grid, right=1.8cm of s4] {$s_5$};
    \node [nodestyle,gray] (s6) [on grid,below=1.0cm of s4,xshift=0.9cm] {$s_6$};
        \node [nodestyle,gray] (s7) [on grid,above=1.0cm of s1,xshift=0.9cm] {$s_7$};

    \draw[->] (s0) -- node [auto] {\scriptsize $1-v$} (s1);
    \draw[->,] (s0) -- node [below,pos=0.3, yshift=-0.1cm] {\scriptsize $v$} (s4);

    \draw[->] (s1) -- node [below,yshift=-3.5pt] {\scriptsize $0.1\cdot(1-v)$} (s2);
        \draw[->,gray] (s2) -- node [right,xshift=-2pt,yshift=3pt] {\scriptsize $1-v$} (s7);
        \draw[->,gray] (s1) -- node [left,xshift=2pt,yshift=4pt] {\scriptsize $v+0.9\cdot(1-v$)} (s7);
        \draw[->,gray] (s5) -- node [right,xshift=1pt,yshift=-3pt] {\scriptsize $v$} (s6);
        \draw[->,gray] (s4) -- node [left,xshift=-2pt,yshift=-3pt] {\scriptsize $1-0.5\cdot v^2$} (s6);

    \draw[->] (s2) -- node [auto] {\scriptsize $v$} (s3);
        \draw[->] (s4) -- node [auto,yshift=-1.25pt] {\scriptsize $0.5\cdot v^2$} (s5);
        \draw[->] (s5) -- node [right,xshift=1pt,yshift=-5pt] {\scriptsize $1-v$} (s3);

   \draw(s3) edge[loop right] node [right] {\scriptsize $1$} (s3);
      \draw(s6) edge[loop right,gray] node [right] {\scriptsize $1$} (s6);
   \draw(s7) edge[loop right,gray] node [right] {\scriptsize $1$} (s7);
\end{tikzpicture}}%
\scalebox{0.9}{
\begin{tikzpicture}[baseline]
\definecolor{color1}{rgb}{0.1,0.498039215686275,0.9549019607843137}
\begin{axis}[width=2.2in,height=1.9in,ymin=-0.0001,xmin=-0.01,xmax=1.01,xmajorgrids,ymajorgrids,xlabel={$v$},ylabel={$\reachPropgT$},ytick={0,0.05,0.1,0.13,0.15},yticklabels={0,0.05,0.1,$\lambda$,0.15},clip=false]

\addplot [color=color1,line width=0.5pt,dashed,domain=0:1, samples=101,unbounded coords=jump, name path=C]{0.5*x^3*(1-x) + 0.1*x^6*(1-x) +  0.1*x*(1-x)^3 +  0.1*(1-x)^2 + 0.04 + 0.05*x};
\addplot [color=color1,line width=1.5pt,domain=0:0.15, samples=101,unbounded coords=jump, name path=CC]{0.5*x^3*(1-x) + 0.1*x^6*(1-x) +  0.1*x*(1-x)^3 +  0.1*(1-x)^2 + 0.04 + 0.05*x};
\addplot [color=gray!50!white,line width=0pt,domain=0:0.15, samples=10,unbounded coords=jump, name path=DD]{0};
\addplot [color=gray!50!white,line width=0pt,domain=0.15:0.525, samples=10,unbounded coords=jump, name path=EE]{0};
\addplot [color=color1,line width=1.5pt,domain=0.525:0.89, samples=101,unbounded coords=jump, name path=CCC]{0.5*x^3*(1-x) + 0.1*x^6*(1-x) +  0.1*x*(1-x)^3 +  0.1*(1-x)^2 + 0.04 + 0.05*x};
\addplot [color=gray!50!white,line width=0pt,domain=0.525:0.89, samples=10,unbounded coords=jump, name path=DDD]{0};
\addplot [color=gray!50!white,line width=0pt,domain=0.89:1, samples=10,unbounded coords=jump, name path=EEE]{0};
\addplot [color=gray!50!white,line width=0pt,domain=0:0.15, samples=10,unbounded coords=jump, name path=DDA]{0.13};
\addplot [color=gray!50!white,line width=0pt,domain=0.15:0.525, samples=10,unbounded coords=jump, name path=EEA]{0.13};
\addplot [color=gray!50!white,line width=0pt,domain=0.525:0.89, samples=10,unbounded coords=jump, name path=DDDA]{0.13};
\addplot [color=gray!50!white,line width=0pt,domain=0.89:1, samples=10,unbounded coords=jump, name path=EEEA]{0.13};
     \addplot[red!20!white,domain=0:0.15] fill between[of=DD and DDA];
    \addplot[red!20!white,domain=0.525:0.89]  fill between[of=DDD and DDDA];
         \addplot[green!20!white,domain=0.15:0.525] fill between[of=EE and EEA];
    \addplot[green!20!white,domain=0.89:1]  fill between[of=EEE and EEEA];
    \draw [color=black,dashed,line width=1.5pt](axis cs:0,0.0) -- node[left]{} (axis cs:0,0.14);
    \draw [color=black,dashed,line width=1.5pt](axis cs:0.15,0.0) -- node[left]{} (axis cs:0.15,0.13);
    \draw [color=black,dashed,line width=1.5pt](axis cs:0.525,0.0) -- node[left]{} (axis cs:0.525,0.13);
    \draw [color=black,dashed,line width=1.5pt](axis cs:0.89,0.0) -- node[left]{} (axis cs:0.89,0.13);
    \draw [color=black,dashed,line width=1.5pt](axis cs:0,0.13) -- node[left]{} (axis cs:1.0,0.13);
%


\end{axis}
\end{tikzpicture}}%
\caption{Left: A uMC with parameter $v$. Right: The probability of satisfying the reachability specification $\reachPropSymbol=\reachPropgT$ versus the value of the parameter $v$. Intervals that satisfy $\reachPropSymbol$ are green, intervals that violate $\reachPropSymbol$ are red.}
\label{fig:exampleparam}
\end{figure}
\section{Problem Statement}

In this section, we state the problem that we study in this paper.
We seek to compute the satisfaction probability of the parameter space for a reachability or an expected cost specification $\varphi$ on a uMDP.
Intuitively, we seek the probability that a randomly sampled instantiation from the parameter space induces an MDP which satisfies $\varphi$.
Formally: Given a uMDP $\vmdp=\left(\pmdp,  \probdist\right)$, and a specification $\varphi$, compute the satisfaction probability $\satprob$. 
%
%
%
 %
%
However, as mentioned, the problem is in general undecidable~\cite{arming2018parameter}.
Therefore, we consider an approximation of computing the satisfaction probability:
\begin{mdframed}[backgroundcolor=gray!30]
\begin{problem}
\label{prob:primal_with_confidence}
Given a uMDP $\vmdp=\left(\pmdp,  \probdist\right)$, a reachability specification $\varphi_r = \reachPropgT$, and a \emph{tolerance probability} $\nu$, compute a confidence probability $\alpha_{\nu}$ such that $\satreachprob \geq 1 - \nu$ holds with a probability of at least $1-\alpha_{\nu}$.
  \end{problem}
\end{mdframed}
%
%
We illustrate the problem statement with the following example.
 \begin{example}
 For the UAV motion planning example, consider the question \textquotedblleft What is the probability on a given day such that there exists a policy for the UAV to successfully finish the mission.\textquotedblright~   
 A possible result is, e.g.,  0.78 (confidence probability: 0.99) and 0.81 (confidence probability: 0.95).
 Then, with a confidence probability of 0.99, the actual satisfaction probability is indeed greater than 0.78, and with a (slightly lower) confidence probability of 0.95 it is greater than 0.81. 
Such a result shows that it is quite likely that the UAV will finish the mission successfully with a probability that is at least 81\%.
 \end{example}
 
\noindent Similar to Problem~1, we also consider expected cost specifications.
 \begin{mdframed}[backgroundcolor=gray!30]
 \begin{problem}
 \label{prob:cost_with_confidence}
Given a uMDP $\vmdp=\left(\pmdp,  \probdist\right)$, and an expected cost specification  $\ereachPropSymbol=\expRewProp{\kappa}{G}$, a tolerance probability $\nu$, and a confidence probability $\alpha_\nu$  determine if there exists an instantiation to the cost parameters such that $\satcostprob\geq 1 - \nu$ holds with a probability of at least $1-\alpha_{\nu}$.
 \end{problem}
 \end{mdframed}
\begin{remark}
The main difference between Problem~\ref{prob:primal_with_confidence} and Problem~\ref{prob:cost_with_confidence} is that we consider \emph{controllable} cost parameters. 
We seek to compute an instantiation to these parameters such that the satisfaction probability is greater than $1-\nu$ with high confidence.
\end{remark}
\section{Scenario-Based Verification}
\label{sec:robust}
In this section, we present our approach to solving Problem~\ref{prob:primal_with_confidence} and~\ref{prob:cost_with_confidence}, that is, to approximate the satisfaction probability with respect to a specification.
We first consider the robust policy synthesis problem that accounts \emph{for all possible values} in the uncertainty set, potentially leading to a very pessimistic result.
This problem can be formulated as a semi-infinite convex optimization problem, which is NP-hard~\cite{wiesemann2013robust}.
Here, we exploit the structure of this problem, which includes finitely many variables but infinitely many constraints. 
Our approach is based on \emph{scenario optimization}~\cite{DBLP:journals/tac/CalafioreC06,DBLP:journals/siamjo/CampiG08}: We sample a finite number of parameter values and restrict the semi-infinite problem to these samples.
The resulting \emph{finite-dimensional} convex optimization problem can be solved efficiently~\cite{boyd_convex_optimization}.  
Based on the solution of the optimization problem, we compute high confidence in the estimate of the satisfaction probability. 
The estimate also generalizes to the samples from the probability distribution that are not in the sample set.  


\begin{remark}
For ease of presentation, we focus on uncertain Markov chains (uMCs).
Our results and methods generalize to uncertain MDPs (uMDPs). 
\end{remark}
We first develop the main results for the simple setting where \emph{all sampled} instantiated MCs from the parameter space $\paramspace[\dtmc]$ satisfy the reachability specification $\varphi_r$. 
This assumption does not imply that \emph{all} instantiated MCs satisfy $\varphi_r$: The sample set does not contain an MC that violates $\varphi_r$ even though there exists such an MC in the parameter space. 
In Section~\ref{sec:scenario:generalized}, we drop this assumption and allow sampled points in $\paramspace[\dtmc]$ to violate $\varphi_r$.
This completes our treatment of Problem~1.
In Section~\ref{sec:scenario:costs}, we show how our results generalize to expected cost specifications $\varphi_c$, to solve Problem~2. 



\subsection{Restriction to Satisfying Samples}
\label{sec:scenario:closetoone}
In this section, we assume that all instantiated MCs satisfy $\varphi_r$.
We then generalize our method to any values of $\nu$.
We want to check if a uMC $\dtmc$ satisfies a reachability specification $\reachPropSymbol=\reachPropgT$ for all instantiations in the sample set $\mathcal{U}$. 
For each instantiation, we can formulate a linear program (LP) that is feasible if and only if $\varphi_r$ is satisfied~\cite{puterman2014markov}. 
For a subset $\mathcal{U} \subseteq \paramspace[\dtmc]$ of the parameter space $\paramspace[\dtmc]$ of the uMC $\dtmc$, we can then write the conjunction of these LPs. 
We assume that $\vert \mathcal{U}\vert$ is finite and sampled from the probability distribution $\PP$ over the parameter space $\paramspace[\dtmc]$.

For each instantiation $u \in \mathcal{U}$, we introduce a set of linear constraints that are parametrized by $u$\footnote{we assume that each sample has a unique index}.
We use the following variables.
For $s \in S$ and $u \in \mathcal{U}$, the variable $p^u_s \in [0,1]$ represents the probability of reaching the target set $T \subseteq S$ from state $s$. 
The variable $\tau$ represents an upper bound on the probability of satisfying $\varphi_r$ for all instantiations in $\mathcal{U}$. Note that $\tau$ is a variable in our formulation, whereas $\lambda$ is the threshold of the reachability specification, and thus constant. 
The set $\neg\exists\finally T$ represents the set of states which cannot reach the target set $T$. 
The probability of reaching $T$ from these states is zero, and the set $\neg\exists\finally T$ does not change for different graph-preserving instantiations~\cite{param_sttt}.  
The set $\neg\exists\finally T$ can be found in polynomial time in the size of a uMC by using standard graph-based search algorithms~\cite{BK08}. 
We solve the following LP $\mathcal{L}_r(\mathcal{U})$, which is parametrized by each instantiation $u$ in $\mathcal{U}$,

\begin{align}
	&\displaystyle \minimize   \;\;	\tau 	\label{eq:robustmc_spec_obj}\\
		&\subjectto \quad \forall  u \in \mathcal{U}, \nonumber\\
		& p^u_{\sinit} \leq \tau,\label{eq:robustmc_spec_cons1}\\
	& p^u_{\sinit} \leq \lambda,\label{eq:robustmc_spec_cons2}\\
	&	 \displaystyle p^u_s = 1,\quad\forall s \in T,\label{eq:robustmc_spec_cons3}\\
		& \displaystyle  p^u_s = 0,\quad\forall s \in \neg\exists\finally T,\label{eq:robustmc_spec_cons4}\\
&	 \displaystyle  p^u_s = \sum\nolimits_{s' \in S}\mathcal{P}(s,s')[u] \cdot p^u_{s'},\quad\forall s \in S \setminus \left( T \cup \neg\exists\finally T \right).\label{eq:robustmc_spec_cons5}
\end{align}
The objective~\eqref{eq:robustmc_spec_obj} minimizes the maximal probability that can be achieved by all MCs induced by $\mathcal{U}$. 
The constraint~\eqref{eq:robustmc_spec_cons1} represents an upper bound on the reachability probability for all instantiations. We minimize the upper bound to compute the maximal probability of satisfying $\varphi_r$ for all instantiated MCs.
The constraint~\eqref{eq:robustmc_spec_cons2} ensures that the probability of reaching $T$ from the initial state $\sinit$ is below the threshold $\lambda$. 
The constraint~\eqref{eq:robustmc_spec_cons3} sets the probability to reach a state in $T$ from $T$ to 1. 
The constraint~\eqref{eq:robustmc_spec_cons4} sets the reachability probabilities from the states in $ \neg\exists\finally T$ to zero.
The constraint~\eqref{eq:robustmc_spec_cons5} computes the probability of satisfying the specification for each non-target state $s \in S$ in the standard way.  

There are infinitely many constraints in the semi-infinite LP $\mathcal{L}_r(\paramspace[\dtmc])$ as the cardinality of $(\paramspace[\dtmc])$ is infinite and $\mathcal{L}_r(\paramspace[\dtmc])$ has infinitely many constraints in the form of~\eqref{eq:robustmc_spec_cons1}--\eqref{eq:robustmc_spec_cons5}. 
Our approach is based on \emph{scenario optimization}~\cite{calafiore2005uncertain,DBLP:journals/tac/CalafioreC06,DBLP:journals/siamjo/CampiG08}, where we instantiate the parameters $u \in \paramspace[\dtmc]$ by sampling the probability distribution $\PP$. 
Then, for a given violation probability $\nu \in (0, 1)$, we compute a solution that violates the constraints in the LP $\mathcal{L}_r(\paramspace[\dtmc])$ with a probability that is not larger than $\nu$. 
We first give some properties of the LP $\mathcal{L}_r(\mathcal{U})$. 

\begin{thm}
	Let uMC $\dtmc$ and the sample sets $\mathcal{U} \subseteq \paramspace[\dtmc]$ with $K = \vert \mathcal{U} \vert \geq  2$. 
	Assume for all $u \in \mathcal{U}$, $\dtmc[u] \models \reachPropSymbol$.
	\label{thm:probabilistic_bound_1}
  For a given \emph{tolerance probability} $\nu\in[0,1)$, let the associated \emph{confidence probability}
    \begin{equation}
    \label{eq:tolerance_level}
\alpha_\nu=\sum\nolimits_{i=0}^{1}\dbinom{K}{i}(1-\nu)^{K-i}\nu^i.
\end{equation}
	Then, with a probability of at least $1-\alpha_\nu$, we have 
	\begin{equation}
	\label{eq:probabilistic_bound_1} \satreachprobmc \geq 1-\nu.
	\end{equation}
\end{thm}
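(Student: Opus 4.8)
The plan is to reduce the semi-infinite program to a one-dimensional scenario program, invoke the scenario-optimization theorem, and then translate its constraint-violation guarantee into the claimed bound on the satisfaction probability.

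First I would strip $\mathcal{L}_r(\mathcal{U})$ down to its essential structure. For each fixed graph-preserving instantiation $u$, the constraints~\eqref{eq:robustmc_spec_cons3}--\eqref{eq:robustmc_spec_cons5} are precisely the defining linear system of reachability probabilities in the induced MC $\dtmc[u]$, which has a unique solution. Hence every $p^u_s$ is forced to equal the reachability probability from $s$, and in particular $p^u_{\sinit}$ equals $r(u)$, the probability of reaching $T$ from $\sinit$ in $\dtmc[u]$. The only variable coupling distinct samples is therefore $\tau$, and the program collapses to minimizing $\tau$ subject to $r(u)\le\tau$ (with the redundant $r(u)\le\lambda$) for all $u\in\mathcal{U}$; its optimum is $\tau^\ast=\max_{u\in\mathcal{U}}r(u)$. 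Because every sampled chain satisfies $\varphi_r$ by assumption, $r(u)\le\lambda$ for all $u\in\mathcal{U}$, so $\tau^\ast\le\lambda$ and the program is feasible.

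Next I would apply the scenario approach~\cite{DBLP:journals/tac/CalafioreC06,DBLP:journals/siamjo/CampiG08}. Read as a convex scenario program in the coupling variable $\tau$, $\mathcal{L}_r(\mathcal{U})$ fits the standard theorem that bounds, over the random draw $\mathcal{U}\sim\PP^{K}$, the probability that the scenario optimum fails to be robust at level $\nu$; that is, the probability of the event $\PP\{u\in\paramspace[\dtmc] : r(u)>\tau^\ast\}>\nu$ is at most $\sum_{i=0}^{d-1}\binom{K}{i}\nu^{i}(1-\nu)^{K-i}$, where $d$ is the number of decision variables of the program. Matching this to~\eqref{eq:tolerance_level} fixes the upper summation index at $1$ and simultaneously explains the hypothesis $K\ge 2$, since the scenario theorem needs at least $d$ samples. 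On the complementary event, of probability at least $1-\alpha_\nu$, we have $\PP\{r(u)>\tau^\ast\}\le\nu$.

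It then remains to translate the violation bound into the specification statement. Since $\tau^\ast\le\lambda$, the sublevel set $\{u : r(u)\le\tau^\ast\}$ is contained in $\{u : r(u)\le\lambda\}=\{u : \dtmc[u]\models\varphi_r\}$. Therefore, on the good event, $\satreachprobmc=\PP\{\dtmc[u]\models\varphi_r\}\ge\PP\{r(u)\le\tau^\ast\}=1-\PP\{r(u)>\tau^\ast\}\ge1-\nu$, which is exactly~\eqref{eq:probabilistic_bound_1} holding with probability at least $1-\alpha_\nu$. Measurability of $I_{\varphi_r}$, hence of $r(\cdot)$, was already noted after the definition of the satisfaction probability, so the events above are well defined.

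The hard part will be pinning down the effective dimension $d$ that yields the exponent in~\eqref{eq:tolerance_level}. The genuinely one-dimensional reduction suggests the tighter value $(1-\nu)^{K}$ (the law of the maximum order statistic of $r(u)$), so one must justify why the stated, more conservative bound with summation index $1$ is the right invariant to cite---most cleanly by appealing to the general Calafiore--Campi upper bound with $d$ counting $\tau$ together with the initial-state probability, which avoids imposing any non-degeneracy (tie-breaking) assumption on the distribution of $r(u)$. Verifying the theorem's remaining hypotheses (convexity, which is immediate as everything is linear; existence and essential uniqueness of the scenario solution) is then routine bookkeeping.
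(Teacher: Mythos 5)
Your proposal is correct and takes essentially the same route as the paper: both reduce the sampled LP $\mathcal{L}_r(\mathcal{U})$ to a convex scenario program whose feasible set is viewed as a scenario approximation of the chance-constrained problem, invoke Campi--Garatti's Theorem 1 to bound the probability that the scenario solution violates the chance constraint, and identify that violation probability with the probability that a fresh instantiation falsifies $\varphi_r$. The only difference is bookkeeping for the effective dimension: the paper obtains the summation index $1$ (i.e.\ $d=2$, whence $K\geq 2$) by treating the pair $(\lambda,\tau)$ as the decision variables of the chance-constrained program, whereas you count $\tau$ plus one auxiliary variable; both yield the stated $\alpha_\nu$, and your remark that the genuinely one-dimensional collapse would give the sharper bound $(1-\nu)^{K}$ is consistent, since the paper's $\alpha_\nu$ is simply the more conservative of the two.
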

\begin{proof}

The key idea of the proof is to relate the finite LP $\mathcal{L}_r(\mathcal{U})$ induced by a sampled set $\mathcal{U}$ to the semi-infinite LP $\mathcal{L}_r(\paramspace[\dtmc])$. 
Then, we use the results given in~\cite[Theorem 1]{DBLP:journals/siamjo/CampiG08} to obtain the lower bound $1-\alpha_\nu$. 
%
Let the convex set $C^{\vdtmc}_{\mathcal{U}}(\lambda, \tau)$ be generated by the set $\mathcal{U}$ according to the probability distribution $\PP$ over $\paramspace[\dtmc]$ as
	\begin{equation*}
	\begin{array}{l}
	C^{\vdtmc}_{\mathcal{U}}(\lambda, \tau) =\lbrace(\lambda,\tau) \mid ~\forall u \in \mathcal{U}  ~\textup{satisfying}~\eqref{eq:robustmc_spec_cons1}-\eqref{eq:robustmc_spec_cons5}\rbrace.\tag{$\star$}
	 \end{array}
	\end{equation*} 
The convex set $C^{\vdtmc}_{\mathcal{U}}(\lambda, \tau)$ constitutes the set of feasible instantiations to the LP $\mathcal{L}_r(\mathcal{U})$ and is exactly in the form of Equation 5 in~\cite{DBLP:journals/siamjo/CampiG08}. Using $C^{\vdtmc}_{\mathcal{U}}(\lambda, \tau)$, we reformulate $\mathcal{L}_r(\mathcal{U})$ as the convex program
	\begin{equation}
	\label{eq:robustmc_scenario}
\begin{array}{ccll}
&\minimize & \displaystyle \tau \\
&\subjectto 
& \displaystyle (\lambda,\tau)\in\mathcal C^{\vdtmc}_{\mathcal{U}}(\lambda, \tau),
\end{array}
\end{equation}
\noindent where the last constraint denotes that for a given $(\lambda,\tau)$, the feasible set of $C^{\vdtmc}_{\mathcal{U}}(\lambda, \tau)$ is not empty, i.e., there exists a feasible solution pair $(\lambda,\tau)$ to the scenario problem $\mathcal{L}_r(\mathcal{U})$.
This convex program asserts that all MCs in $\mathcal{U}$ should induce a reachability probability that is less than $\tau$, satisfying the specification $\varphi_r$.
Moreover, the convex program constitutes a scenario approximation to the so-called \emph{chance-constrained problem}~\cite{charnes1959chance}.  
Such an optimization problem states that the probability of satisfying a (chance) constraint is above a certain threshold:
\begin{equation}
\label{eq:robustmc_true}
\begin{array}{ccll}
&\minimize & \displaystyle \tau \\
&\subjectto & \displaystyle (\lambda,\tau)\in\RR\times\RR,\\
&& \displaystyle \PP\left((\lambda,\tau)\in\mathcal C^{\vdtmc}_{\paramspace[\dtmc]}(\lambda,\tau)\right)\geq 1-\nu.
\end{array}
\end{equation}
The chance constraint in~\eqref{eq:robustmc_true} ensures that the probability that an instantiation---obtained via distribution $\PP$---satisfies the specification $\varphi_r$ is at least~$1-\nu$. 
Theorem 1 in~\cite{DBLP:journals/siamjo/CampiG08} shows that any feasible solution to the problem in~\eqref{eq:robustmc_scenario} is feasible to the problem in~\eqref{eq:robustmc_true} with a confidence probability of $1-\alpha_\nu$, which shows that the violation probability of the solution is at most $\nu$. 
In our case, the probability of violation is exactly the probability that the instantiated MCs do not satisfy the specification $\varphi_r$. 
Thus, the claim follows. 
	\end{proof}

\begin{remark}[Independence to model size]\label{remark:independent}
	The confidence probability in Theorem~\ref{thm:probabilistic_bound_1} is in fact independent from the number of states, transitions, or random parameters of the uMC. 
	From a practical perspective, the number of samples that are needed for a certain confidence does not depend on the model size. 	
\end{remark}
	Finally, Theorem~\ref{thm:probabilistic_bound_1} asserts that with a probability of at least $1-\alpha_\nu$, the next sampled point from $\paramspace[\dtmc]$ will satisfy the specification with a probability of at least $1-\nu$. Note that $\alpha_\nu$ is the tail probability of a binomial distribution. 
	It converges exponentially rapidly to $0$ in $\vert  \mathcal{U} \vert$~\cite{DBLP:journals/siamjo/CampiG08}. 

\subsection{Satisfaction Probability by Treating Violating Samples}\label{sec:frac}
\label{sec:scenario:generalized}

Theorem~\ref{thm:probabilistic_bound_1} assumes that all sampled points, that is, the induced MCs, satisfy the specification $\varphi_r$. 
This is a severe assumption in general.
To lift this assumption, we consider the \emph{discarding approach} from~\cite{campi2011sampling}.
Specifically, after sampling a set of instantiations $\mathcal{U}$ from $\paramspace[\dtmc]$ according to the probability distribution $\PP$, we remove the constraints for the MCs that violate the specification $\varphi_r$ from the LP. 
We construct the set $\mathcal{R}=\mathcal{U}\setminus\mathcal{Q}$, where $\mathcal{Q}$ denotes the set of samples that induce MCs violating the specification $\varphi_r$.
Therefore, the set $\mathcal{R}$ denotes the set of sampled MCs that satisfy the specification $\varphi_r$. 
We then solve the LP $\mathcal{L}_r(\mathcal{R})$
\begin{equation}
	\label{eq:robustmc_spec_sample_discard}
	\begin{array}{llllllll}
	&\displaystyle \minimize \;\;  \tau \\
	&\subjectto \quad \forall u \in \mathcal{R},\\
	&\eqref{eq:robustmc_spec_cons1}-\eqref{eq:robustmc_spec_cons5},
	\end{array}
\end{equation}
where for $u \in \mathcal{R}$ and $s \in S$, $p^u_s$ gives the probability of satisfying the reachability specification of the instantiated MC $\dtmc[u]$ at state $s$.
The other constraints in the optimization problem in LP $\mathcal{L}_r(\mathcal{R})$ are identical to the LP $\mathcal{L}_r(\mathcal{U})$.
We give the main result of this section.

\begin{thm}
Let uMC $\dtmc$ and the sample sets $\mathcal{U},\mathcal{Q}\subseteq \paramspace[\dtmc]$,  with $K = \vert \mathcal{U} \vert \geq  2$ and $L = \vert \mathcal{Q}\vert $. 
	\label{thm:probabilistic_bound_discard}
  For a given \emph{tolerance probability} $\nu\in[0,1)$, the associated \emph{confidence probability} is
    \begin{equation}
    \label{eq:tolerance_level_disard}
\alpha_\nu=\dbinom{L+1}{L}\sum\nolimits_{i=0}^{L+1}\dbinom{K}{i}(1-\nu)^{K-i}\nu^i.
\end{equation}
	Then, with a probability of at least $1-\alpha_\nu$, we have	\begin{equation}
	\label{eq:probabilistic_bound_discard} \satreachprobmc \geq 1-\nu.
	\end{equation}
\end{thm}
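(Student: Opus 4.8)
The plan is to reproduce the reduction used in the proof of Theorem~\ref{thm:probabilistic_bound_1}, but now invoking the \emph{sampling-and-discarding} variant of scenario optimization from~\cite{campi2011sampling} in place of the basic scenario theorem. First I would reformulate the LP $\mathcal{L}_r(\mathcal{R})$ in~\eqref{eq:robustmc_spec_sample_discard} as a convex scenario program over the decision variables $(\lambda,\tau)\in\RR\times\RR$, exactly as in~\eqref{eq:robustmc_scenario}, except that the convex feasible set, say $C^{\vdtmc}_{\mathcal{R}}(\lambda,\tau)$, is now generated only by the retained samples $\mathcal{R}=\mathcal{U}\setminus\mathcal{Q}$. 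The ambient decision dimension is $d=2$, matching the upper summation index $i=1=d-1$ that already appears in~\eqref{eq:tolerance_level}.

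The central observation is that passing from $\mathcal{U}$ to $\mathcal{R}$ is precisely a constraint-removal step: we draw $K=\vert\mathcal{U}\vert$ scenario constraints and discard the $L=\vert\mathcal{Q}\vert$ of them whose induced MCs violate $\reachPropSymbol$. To apply the discarding theorem of~\cite{campi2011sampling} I would verify that this removal rule is admissible, \ie\ that every discarded constraint is violated by the optimal solution of the reduced program. This holds by construction: for each $u\in\mathcal{Q}$ the MC $\dtmc[u]$ violates $\reachPropSymbol$, so its reachability probability from $\sinit$ exceeds $\lambda$ and the constraint~\eqref{eq:robustmc_spec_cons2} fails for that sample at the kept solution $(\lambda,\tau^\ast)$. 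With admissibility established, substituting $N=K$, $k=L$, $d=2$ and violation level $\nu$ into the sampling-and-discarding bound yields exactly $\binom{L+1}{L}\sum_{i=0}^{L+1}\binom{K}{i}(1-\nu)^{K-i}\nu^i$, which is the confidence $\alpha_\nu$ of~\eqref{eq:tolerance_level_disard}.

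Finally, as in Theorem~\ref{thm:probabilistic_bound_1}, I would identify the abstract violation probability of the scenario solution with the probability that a freshly drawn instantiation $u\sim\PP$ induces an MC failing $\reachPropSymbol$; the feasibility guarantee that this probability is at most $\nu$ then reads $\satreachprobmc\ge1-\nu$, and it holds with probability at least $1-\alpha_\nu$ over the draw of $\mathcal{U}$.

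I expect the main obstacle to be the careful verification of the hypotheses of the discarding theorem, rather than any computation. Two points need attention: first, the removal rule must be shown admissible in the precise technical sense the theorem demands (discarded constraints are genuinely violating, together with the non-degeneracy assumption for the reduced LP); and second, the number $L$ of discarded samples is itself data-dependent, so I would need to ensure the cited bound is applied in the form valid for the observed $L$ (equivalently, conditioning on $L$ or invoking the version of the theorem that fixes the removal budget). Once these measure-theoretic and removal-consistency conditions are in place, the result follows by direct specialization of the sampling-and-discarding theorem.
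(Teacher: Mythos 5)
Your proposal follows essentially the same route as the paper's own proof: reformulate $\mathcal{L}_r(\mathcal{R})$ as a scenario program in the two decision variables $(\lambda,\tau)$, treat the removal of the $L$ violating samples as constraint discarding, invoke the sampling-and-discarding theorem of~\cite{campi2011sampling} with $N=K$, $k=L$, $d=2$ to obtain exactly~\eqref{eq:tolerance_level_disard}, and identify the abstract violation probability with the probability that a freshly drawn instantiation fails $\reachPropSymbol$. The hypothesis checks you flag---admissibility of the removal rule (which holds here because the Bellman equality constraints pin $p^u_{\sinit}$ to the true reachability probability, so every discarded block is genuinely violated) and the data-dependence of $L$---are points the paper's proof silently skips, so your version is, if anything, more careful than the original.
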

\begin{proof}
Similar to the proof of Theorem 1, the main idea is to relate the LP $\mathcal{L}_r(\mathcal{R})$ to the chance-constrained convex problem in~\eqref{eq:robustmc_true}. Then, we invoke the results from~\cite[Theorem 1]{campi2011sampling} to get the desired result.
%
Let the convex set $C^{\vdtmc}_{\mathcal{R}}(\lambda, \tau)$, which is generated by the samples in $\mathcal{R}$, be defined by
	\begin{equation*}
	\begin{array}{l}
	C^{\vdtmc}_{\mathcal{R}}(\lambda, \tau)=\lbrace(\lambda,\tau) \mid	\forall u \in \mathcal{R}  \textup{ such that } (*) \textup{ is satisfied}\rbrace.
	 \end{array}
	\end{equation*}
	The set $C^{\vdtmc}_{\mathcal{R}}(\lambda, \tau)$ is in the form of the Definition 2.1 in~\cite{DBLP:journals/siamjo/CampiG08}.
We reformulate the LP $\mathcal{L}_r(\mathcal{R})$ as the convex program
	\begin{equation}
\begin{array}{ccll}
&\minimize & \displaystyle \tau \\
&\subjectto 
& \displaystyle (\lambda,\tau)\in C^{\vdtmc}_{\mathcal{R}}(\lambda, \tau).
\end{array}
\label{eq:scenario_arbitrary}
\end{equation}
where the last constraint denotes that the instantiated MCs from the parameter values of the set $\mathcal{R}$ should induce a reachability probability less than $\tau$, and thus, satisfy the specification $\varphi_r$. 
The problem in~\eqref{eq:scenario_arbitrary} is a scenario approximation to the problem in~\eqref{eq:robustmc_true}.
Theorem 2.1 in \cite{campi2011sampling} asserts that with a probability of $\alpha_\nu$, the violation probability of the solution is at most $\nu$, which is the probability of violating the specification for the next sample. 
Similar to Theorem~1, the violation probability $\nu$ is the probability that an instantiated MC does not satisfy the specification $\varphi_r$. Thus, the claim follows.
\end{proof}

\subsection{Expected Cost Specifications}
\label{sec:scenario:costs}
So far, we have focused on parameters that were uncontrollable, and assumed to be random.
Now, we consider the case where the cost function $c$ is parametric and the cost parameters are \emph{controllable}. 
Therefore, the parameters in the cost function are now variables that we can optimize over to satisfy an expected cost specification $\varphi_c=\expRewProp{\kappa}{G}$ for the instantiated MCs. 
Similar to the previous sections, we assume that we sample a set of instantiations $\mathcal{U}_c$ from the probability distribution $\PP$ over the parameter space $\paramspace[\dtmc]$.
In this case, we modify the LP $\mathcal{L}_r({\mathcal{U}})$ to obtain the following LP, which we denote by $\mathcal{L}_c({\mathcal{U}_c}),$ 

\begin{equation}
	\label{eq:robustmc_cost_sample}
	\begin{array}{llllllll}
	&\displaystyle\minimize \;\;\displaystyle  \;\;\tau \\
	&\subjectto\quad \forall u \in \mathcal{U}_c,\\
		& \displaystyle  c^{u}_{\sinit}\leq \tau,\\
	& \displaystyle  c^{u}_{\sinit}\leq \kappa,\\
	& \displaystyle c^{u}_s = 0\quad\forall s \in G,\\
& \displaystyle c^{u}_s = c(s)+ \sum\nolimits_{s' \in S}\mathcal{P}(s,s')[u]\;c^{u}_{s'} \quad \forall s \in S \setminus  G,
	\end{array}
\end{equation}
where for $s \in S$, $c(s) \in \RR^{\vert \mathcal{W}\vert }_{\geq 0}$ is the cost function at state $s$, $\vert \mathcal{W} \vert$ is the number of the cost parameters, and for $u \in \mathcal{U}_c $, $c^k_s$ gives the expected cost of reaching the target $G$ of the instantiated MC $\dtmc[k]$ at state $s$. 
Note that the cost parameters $\mathcal{W}$ are in the LP $\mathcal{L}_c({\mathcal{U}_c})$ as variables for the parametric cost function,.
In the scenario problem~\eqref{eq:robustmc_cost_sample}, we optimize over $c(s)$ and $c^k_s$ to minimize the maximal induced cost of the instantiated MCs.
If $c$ is an \emph{affine} function, then the optimization problem $\mathcal{L}_c({\mathcal{U}}_c)$ is convex. 
In this case, the probabilistic properties of the scenario problem are given by the following theorem.

\begin{thm}
	Let uMC $\dtmc$ and the sample set $\mathcal{U}_c \subseteq \paramspace[\dtmc]$ with $W=\vert \mathcal W\vert$, and $K = \vert \mathcal{U}_c \vert \geq  W+1$. Assume for all $u \in \mathcal{U}_c$, $\dtmc[k] \models \varphi_c$.
	\label{thm:probabilistic_bound_2}
  For a given \emph{tolerance probability} $\nu\in[0,1)$, let the associated \emph{confidence probability}
    \begin{equation}
    \label{eq:tolerance_level_2}
\alpha_\nu=\sum\nolimits_{i=0}^{W+1}\dbinom{K}{i}(1-\nu)^{K-i}\nu^i.
\end{equation}
	Then, with a probability of at least $1-\alpha_\nu$, we have 
$\satcostprobmc \geq 1-\nu$.
\end{thm}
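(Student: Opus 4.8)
The plan is to reuse the scenario-optimization argument of Theorem~\ref{thm:probabilistic_bound_1} almost verbatim; the only genuinely new ingredient is that the $W$ \emph{controllable} cost parameters in $\mathcal{W}$ now enter the scenario program as additional decision variables. First I would check that, under the stated assumption that $c$ is affine, the LP $\mathcal{L}_c(\mathcal{U}_c)$ in~\eqref{eq:robustmc_cost_sample} is a bona fide convex (in fact linear) program over the full tuple of decision variables. The only constraint that could spoil convexity is the Bellman-type equation $c^{u}_s = c(s)+\sum_{s'\in S}\mathcal{P}(s,s')[u]\,c^{u}_{s'}$: here $c(s)$ is affine in the cost parameters by hypothesis, the coefficients $\mathcal{P}(s,s')[u]$ are \emph{constants} because $u$ is a sampled (uncontrollable) instantiation, and the $c^{u}_{s'}$ are decision variables, so the right-hand side is jointly affine in all variables and the feasible set is convex. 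Finiteness of the $c^{u}_s$ is guaranteed by well-definedness of the expected cost, i.e.\ that $G$ is reached with probability one.

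Next I would mirror the reformulation step of Theorem~\ref{thm:probabilistic_bound_1}: define the convex feasible set $C^{\vdtmc}_{\mathcal{U}_c}(\kappa,\tau)$ in direct analogy with $C^{\vdtmc}_{\mathcal{U}}(\lambda,\tau)$, but collecting the constraints of~\eqref{eq:robustmc_cost_sample}, and rewrite $\mathcal{L}_c(\mathcal{U}_c)$ as the scenario program that minimizes $\tau$ over this set. This scenario program is the finite-sample counterpart of a chance-constrained problem of exactly the shape of~\eqref{eq:robustmc_true}, whose chance constraint demands that a fresh instantiation drawn from $\PP$ induce an MC whose optimal expected cost of reaching $G$ stays below $\kappa$ with probability at least $1-\nu$.

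The decisive bookkeeping step is the dimension count. In Theorem~\ref{thm:probabilistic_bound_1} the design point was the pair $(\lambda,\tau)\in\RR\times\RR$, so the Campi--Garatti bound ran up to $d-1=1$. In the cost setting the design point additionally carries the $W$ controllable cost parameters together with the bound $\tau$ and the threshold $\kappa$; it therefore lives in $\RR^{W}\times\RR\times\RR$, giving design dimension $d=W+2$. Substituting $d=W+2$ into Theorem~1 of~\cite{DBLP:journals/siamjo/CampiG08}---the standing hypothesis $K\geq W+1$ ensuring that the scenario program is well-posed---yields precisely the confidence $\alpha_\nu=\sum_{i=0}^{W+1}\binom{K}{i}(1-\nu)^{K-i}\nu^i$ of~\eqref{eq:tolerance_level_2} and certifies that, with probability at least $1-\alpha_\nu$, the solution violates the chance constraint with probability at most $\nu$.

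Finally I would translate the violation probability back into the problem's terms exactly as before: a violated sampled constraint is the event that the instantiated MC $\dtmc[u]$ fails $\varphi_c$, so a violation probability of at most $\nu$ is precisely the assertion $\satcostprobmc\geq 1-\nu$. The main obstacle I anticipate is not the probabilistic machinery, which is inherited unchanged, but keeping the convexity argument airtight while simultaneously justifying that the relevant design dimension is $W+2$ rather than $W+1$. The extra unit of dimension---from carrying $\kappa$ alongside $\tau$, mirroring the $(\lambda,\tau)$ pair of Theorem~\ref{thm:probabilistic_bound_1}---is exactly what shifts the upper summation limit from $W$ to $W+1$, so pinning down this count is the crux of reproducing~\eqref{eq:tolerance_level_2}.
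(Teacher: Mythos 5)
Your proposal is correct and follows essentially the same route as the paper: reformulate $\mathcal{L}_c(\mathcal{U}_c)$ as a scenario program over the convex set generated by the sampled constraints, view it as a scenario approximation of the chance-constrained problem~\eqref{eq:robustmc_true_2}, and invoke Theorem~1 of~\cite{DBLP:journals/siamjo/CampiG08} with design point $(\kappa,\tau,c)\in\RR\times\RR\times\RR^{W}$, i.e.\ dimension $W+2$, which yields the summation limit $W+1$ in~\eqref{eq:tolerance_level_2}. Your explicit verification of convexity under the affine-cost assumption and the careful dimension bookkeeping actually spell out steps the paper leaves implicit (its proof ends with ``similar to Theorem~1''), so your write-up is a faithful, slightly more detailed version of the paper's argument.
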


\begin{proof}
Following the proof of Theorem 1, we define the convex set 
	\begin{equation*}
	\begin{array}{l}
			C^{\vdtmc}_{\mathcal{U}_c}(\kappa, \tau,c)=\big\lbrace(\kappa,\tau,c)\mid\quad\forall u \in \mathcal{U}_c \textup{ such that }\\
		\qquad\displaystyle c^u_{\sinit}\leq \tau,\\
			 \qquad\displaystyle  c^{u}_{\sinit}\leq \kappa,\\
	\qquad \displaystyle c^{u}_s = 0\quad\forall s \in G,\\
	 \qquad\displaystyle c^u_s =c(s)+\sum\nolimits_{s' \in S}\mathcal{P}(s,s')[u]\;c^u_{s'} \;\; \forall s \in S\setminus G\big\rbrace 
	  \end{array}
	\end{equation*}
	
	The main difference compared to the proof of Theorem 1 is that we have cost parameters in $c$ as the decision variables and we consider an expected cost specification instead of a reachability specification. Similarly to the proof of Theorem 1, we reformulate the LP $\mathcal{L}_c(\mathcal{U}_c)$ as the following convex problem
	\begin{equation}
\begin{array}{ccll}
&\minimize & \displaystyle \tau \\
&\subjectto & \displaystyle (\kappa,\tau,c)\in\RR\times\RR\times \RR^{\vert \mathcal{W} \vert},\\
&& \displaystyle (\kappa,\tau,c)\in	C^{\vdtmc}_{\mathcal{U}_c}(\kappa, \tau, c).
\end{array}
\end{equation}
This convex problem is a scenario approximation to the chance constrained problem given by
\begin{equation}
\label{eq:robustmc_true_2}
\begin{array}{ccll}
&\minimize & \displaystyle \tau \\
&\subjectto & \displaystyle (\kappa,\tau,c)\in\RR\times\RR\times\RR^{\vert \mathcal{W} \vert},\\
&& \displaystyle \PP\left((\kappa,\tau,c)\in	C^{\vdtmc}_{\paramspace[\dtmc]}(\kappa,\tau,c)\right)\geq 1-\nu.
\end{array}
\end{equation}
\noindent Therefore, similar to the Theorem 1, we obtain the desired claim.
\end{proof}

We now consider the case that we compute an instantiation of the cost variables, and some of the instantiated MCs satisfy the expected cost specification. We construct the set $\mathcal{R}_c=\mathcal{U}_c\setminus\mathcal{Q}_c$, where $\mathcal{Q}_c$ denotes the set of samples that induce MCs which violate the specification $\varphi_c$.
For this case, we obtain:

\begin{thm}
Let uMC $\dtmc$ and the sample sets $\mathcal{U}_c,\mathcal{Q}_c\subseteq \paramspace[\dtmc]$,  with  $W=\vert \mathcal W\vert$, $K = \vert \mathcal{U}_c \vert \geq  2$ and $L = \vert \mathcal{Q}\vert $. 
  For a given \emph{tolerance probability} $\nu\in[0,1)$, let the associated \emph{confidence probability}
	\label{thm:probabilistic_bound_discard_2}
    \begin{equation}
    \label{eq:tolerance_level_discard_2}
\alpha_\nu=\dbinom{l+W+1}{l}\sum\nolimits_{i=0}^{l+W+1}\dbinom{K}{i}(1-\nu)^{K-i}\nu^i. 
\end{equation}
	Then, with a probability of at least $1-\alpha_\nu$, we have $\satcostprobmc\geq 1-\nu.$
\end{thm}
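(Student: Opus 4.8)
The plan is to merge the two constructions already developed in this section: the controllable-cost scenario program of Theorem~\ref{thm:probabilistic_bound_2} and the sampling-and-discarding argument of Theorem~\ref{thm:probabilistic_bound_discard}. Concretely, I would first form the set of satisfying samples $\mathcal{R}_c = \mathcal{U}_c \setminus \mathcal{Q}_c$, dropping from the sampled collection exactly the $L = \vert \mathcal{Q}_c \vert$ instantiations whose induced MC violates $\varphi_c$, and then solve the cost LP $\mathcal{L}_c(\mathcal{R}_c)$ obtained by restricting the constraints of~\eqref{eq:robustmc_cost_sample} to $u \in \mathcal{R}_c$. Assuming, as in Theorem~\ref{thm:probabilistic_bound_2}, that the cost function $c$ is affine, this LP is convex, so I can package its feasible region into the convex set $C^{\vdtmc}_{\mathcal{R}_c}(\kappa,\tau,c)$ defined exactly as in the proof of Theorem~\ref{thm:probabilistic_bound_2} but with the sample index restricted to $\mathcal{R}_c$.

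Next I would recast $\mathcal{L}_c(\mathcal{R}_c)$ as the scenario program that minimizes $\tau$ subject to $(\kappa,\tau,c)\in C^{\vdtmc}_{\mathcal{R}_c}(\kappa,\tau,c)$, and observe that it is a scenario approximation of the chance-constrained problem~\eqref{eq:robustmc_true_2}, just as in the earlier proofs. The decisive bookkeeping step is to read off the two quantities that enter the Campi--Garatti sampling-and-discarding bound~\cite{campi2011sampling}: the number of decision variables, which is the dimension of the tuple $(\kappa,\tau,c)$ and hence $W+2$ (the scalars $\kappa$ and $\tau$ together with the $W$-dimensional cost vector $c$), and the number of discarded constraints, which is $L$. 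Substituting decision dimension $d = W+2$ and $k = L$ discarded samples into~\cite[Theorem~2.1]{campi2011sampling} then yields precisely $\binom{L+W+1}{L}\sum_{i=0}^{L+W+1}\binom{K}{i}(1-\nu)^{K-i}\nu^i$, which is the claimed $\alpha_\nu$ with $l = L$. As before, the violation probability of the returned solution equals the probability that a freshly drawn instantiation yields an MC violating $\varphi_c$, so $\satcostprobmc \geq 1-\nu$ holds with probability at least $1-\alpha_\nu$.

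The step I expect to require the most care is justifying that removing exactly the violating-sample constraints is an admissible discarding rule for~\cite[Theorem~2.1]{campi2011sampling}: that result is stated for a specified constraint-removal procedure together with non-degeneracy (fully-supported) assumptions, so I would need to argue that discarding the $\mathcal{Q}_c$-constraints coincides with, or is dominated by, the optimal removal the theorem envisages, mirroring the argument already invoked for Theorem~\ref{thm:probabilistic_bound_discard}. The only genuinely new ingredient relative to that theorem is the inflated decision dimension $W+2$ coming from the controllable cost parameters; verifying that this is the correct count---so that the binomial coefficient and the summation upper limit both land at $L+W+1$---is the crux, and it follows the same convention (counting the fixed threshold $\kappa$ as a coordinate of the decision tuple) that renders Theorems~\ref{thm:probabilistic_bound_2} and~\ref{thm:probabilistic_bound_discard} mutually consistent.
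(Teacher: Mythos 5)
Your proposal is correct and is essentially the paper's own (omitted) argument: the paper states only that the proof is ``similar to the proofs of Theorems~2 and~3,'' and your construction---discarding the $L$ violating samples to form $\mathcal{R}_c=\mathcal{U}_c\setminus\mathcal{Q}_c$, solving the convex cost LP restricted to $\mathcal{R}_c$ as a scenario approximation of the chance-constrained problem, and invoking the Campi--Garatti sampling-and-discarding bound with $k=L$ removed constraints and decision dimension $d=W+2$ for the tuple $(\kappa,\tau,c)$---is exactly that combination and reproduces the stated $\alpha_\nu$ (with $l=L$). Your dimension count is consistent with how the paper's Theorems~2 and~3 instantiate the underlying bounds (dimension $2$ for $(\lambda,\tau)$, dimension $W+2$ for $(\kappa,\tau,c)$), and your caveat about the admissibility of the constraint-removal rule is a genuine hypothesis of the cited result that the paper itself also glosses over, so you are, if anything, more careful than the source.
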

\begin{proof}
The proof is similar to the proofs of Theorem~\ref{thm:probabilistic_bound_discard} and~\ref{thm:probabilistic_bound_2}, and omitted.
\end{proof}
 
\subsection{Building Scenario-Based Algorithms}

The question remains how we leverage the theoretical results to compute an estimate on the satisfaction probability to solve Problems~1 and~2.
For instance, let $\nu$  be a violation probability and $\mathcal{U}$ the sample set. 
Then, we can use Theorem~\ref{thm:probabilistic_bound_discard} or~\ref{thm:probabilistic_bound_discard_2} to compute the confidence probability $\alpha_\nu$ by using the discarding approach from~\cite{campi2011sampling}.
Similarly, for a the sample set $\mathcal{U}$ and a threshold on the confidence probability $\alpha_\nu$ we do a \emph{bisection} on $\nu$. 
Specifically, we repeatedly apply Theorem~\ref{thm:probabilistic_bound_discard} or~\ref{thm:probabilistic_bound_discard_2} for different values of $\nu \in (0, 1)$, to see if the corresponding confidence probability $\alpha_\nu$ is below the threshold. 
We then approximate the lower and upper bounds on $\nu$.

The correctness of the approach is based on scenario-based optimization. 
However, it also applies to an obtained solution by any procedure~\cite{campi2018general}.
For instance, for any obtained value for the controlled parameters, we can construct a scenario program by sampling from random parameters.
We can then apply Theorem~\ref{thm:probabilistic_bound_discard} or~\ref{thm:probabilistic_bound_discard_2} to compute the confidence probability $\alpha_\nu$ or the violation probability~$\nu$.

\paragraph{Generalization to uMDPs.} 
Recall that we want to compute the satisfaction probability for a uMDP.
The probability that for any sampled MDP we are able to synthesize a policy that satisfies the specification $\varphi_r$.
To generalize our results to uMDPs, we can modify the constraint~\eqref{eq:robustmc_spec_cons5} in the LP $\mathcal{L}_r(\mathcal{U})$ as
\begin{align}
	 \displaystyle  p^u_s \leq  \sum\nolimits_{s' \in S}\mathcal{P}(s,\act,s')[u] \cdot p^u_{s'},\quad\forall s \in S \setminus \left( T \cup \neg\exists\finally T \right),\;\;\forall \act \in \ActS(s),\label{eq:MDP_spec}
\end{align}
asserting that, for each non-target state $s \in S$ and action $\act \in \ActS(s)$, the probability induced by the \emph{minimizing policy} is an upper bound to the probability variables $p^u_s$. 
The reachability specification $\varphi_r$ is satisfied if and only if the reachability probability at the initial state induced by the minimizing policy is less than $\lambda$.
We can assert if $\varphi_r$ is satisfied by combining the constraints~\eqref{eq:MDP_spec} with the constraints~\eqref{eq:robustmc_spec_cons1}--\eqref{eq:robustmc_spec_cons4}.
Then, our theoretical results apply to the uMDPs.
\section{Numerical Examples}
We implemented the approach from Section~\ref{sec:robust} using the model checker Storm~\cite{DBLP:conf/cav/DehnertJK017} to construct and  analyze samples of MDPs.
To solve the scenario optimization problems with cost parameters, we used the SCS solver~\cite{scs}. All computations ran on a computer with 8 2.2 GHz cores, and 32 GB of RAM.

We report on a set of well-known benchmarks used in parameter synthesis~\cite{DBLP:journals/corr/abs-1903-07993} that are, for instance, available on the website of the tools \tool{PARAM}~\cite{param_sttt} or part of the \tool{PRISM} benchmark suite~\cite{KNP12b}. 
Moreover, we created a dedicated case study that is based on the aforementioned UAV example.

\subsection{Parameter Synthesis Benchmarks}
{
\setlength{\tabcolsep}{4pt}
\begin{table}[t]

\centering
\caption{Information for the benchmark instances taken from~\cite{quatmann-et-al-atva-2016}.}
\scalebox{0.76}{
\begin{tabular}{cccccrrccrrr}
	\hline
 \rule{0pt}{1.5ex}  &  &  &  & 
  & \multicolumn{2}{c}{{Model Information}}
  & \multicolumn{2}{c}{{Satisfaction Probability}}
   \\
  \cmidrule(lr){6-7} \cmidrule(lr){8-9} \cmidrule(lr){10-11}
  & benchmark
  & instance 
  & $\varphi$
  & \#pars
  & \multicolumn{1}{c}{\enskip\#states} 
  & \multicolumn{1}{c}{\#trans} 
  & \multicolumn{1}{c}{\enskip\enskip sat ($1-\nu$)} 
  & \multicolumn{1}{c}{unsat ($\nu$)}  
  \\
  \hline
\hline

  \parbox[t]{3mm}{\multirow{9}{*}{\rotatebox[origin=c]{90}{\textbf{}}}}
  & \multirow{3}{*}{brp} \rule{0pt}{1.5ex}
  &     (256,5)     & $\mathbb{P}$ & 2 &     19\,720 &      26\,627 &  \enskip\enskip\enskip 0.055   &	0.898 \\
  & &     (16,5)      & $\mathbb{E}$ & 4 &      1\,304 &       1\,731 & \enskip\enskip\enskip 0.275   &   0.676     \\
  & &     (32,5)      & $\mathbb{E}$ & 4 &      2\,600 &       3\,459 & \enskip\enskip\enskip 0.232   &    0.718     \\
\cline{2-11}
  & \multirow{2}{*}{crowds} \rule{0pt}{2.5ex}
  &     (10,5)      & $\mathbb{P}$ & 2 &    104\,512 &     246\,082 & \enskip\enskip\enskip   0.537   &  0.413         \\
  & &     (20,7)      & $\mathbb{P}$ & 2 & 45\,421\,597 & 164\,432\,797 & \enskip\enskip\enskip 0.416   &  0.534  \\
\cline{2-11}
  & \multirow{2}{*}{nand}   \rule{0pt}{2.5ex}
  &     (10,5)      & $\mathbb{P}$ & 2 &     35\,112 &       52\,647 &  \enskip\enskip\enskip 0.218   &  0.733      \\
  & &     (25,5)      & $\mathbb{P}$ & 2 &    865\,592 &   1\,347\,047    & \enskip\enskip\enskip 0.206   &   0.744     \\ 
\cline{2-11}
  \parbox[t]{3mm}{\multirow{12}{*}{\rotatebox[origin=c]{90}{\textbf{}}}}
  & \multirow{2}{*}{consensus} \rule{0pt}{2.5ex}
  &      (2,2)      & $\mathbb{P}$ & 2 &        272 &         492 & \enskip\enskip\enskip 0.280   &  0.669      \\
  & &      (4,2)      & $\mathbb{P}$ & 4 &     22\,656 &      75\,232 &  \enskip\enskip\enskip 0.063   &  0.888     \\
\hline& 
\end{tabular}
}

\label{tab:model_information}
\end{table}
}

\paragraph{Setup.} In our first set of benchmarks, we adopt parametric MDPs and MCs from~\cite{quatmann-et-al-atva-2016}. 
Essentially, the technique from that paper allows to approximate the percentage of instantiations that satisfy (or do not satisfy) a specification.
We assume a uniform distribution over the parameter space and set $\nu$ equal to the percentage of instantiations that do not satisfy the specification (and vice versa for $1-\nu$). 
We solve Problem~1 and show that the satisfaction probability is with confidence $\alpha_{\nu}$  as least as high as the approximate satisfaction percentages from~\cite{quatmann-et-al-atva-2016}.
We adapt the \emph{Consensus} protocol~\cite{consensus} and the \emph{Bounded Retransmission Protocol} (brp)~\cite{HSV94} to uMDPs; the \emph{Crowds Protocol} (crowds)~\cite{shmatikov2004probabilistic} and the \emph{NAND Multiplexing} benchmark (nand)~\cite{HJ02} become uMCs.
In Table~\ref{tab:model_information} we list the type of specification checked $(\varphi)$ and the number of parameters, states, and transitions.
We also list the satisfaction probability (as obtained in~\cite{quatmann-et-al-atva-2016}) for satisfying (sat) and falsifying (unsat) the specification $\varphi$. 

\paragraph{Results.}%
{
\setlength{\tabcolsep}{4pt}
\begin{table}[t]
\centering

\caption{Confidence probabilities $\alpha_\nu$ for different numbers of samples. 
}
\scalebox{0.760}{
\begin{tabular}{cccccccccccc}
	\hline
 \rule{0pt}{1.5ex}  & Samples  &  
  & \multicolumn{2}{c}{{100}}
  & \multicolumn{2}{c}{{1,000}}
  & \multicolumn{2}{c}{{10,000}}
  & 
   \\
 \cmidrule(lr){4-5} \cmidrule(lr){6-7} \cmidrule(lr){8-9} 
  & benchmark
  & instance 
  & \multicolumn{1}{c}{$\alpha_\nu$, sat} 
  & \multicolumn{1}{c}{$\alpha_\nu$, unsat} 
  & \multicolumn{1}{c}{$\alpha_\nu$, sat} 
  & \multicolumn{1}{c}{$\alpha_\nu$, unsat} 
  & \multicolumn{1}{c}{$\alpha_\nu$, sat} 
  & \multicolumn{1}{c}{$\alpha_\nu$, unsat}  
  & \multicolumn{1}{c}{Time (s)}  
  &
  \\
\hline
\hline

  & \multirow{3}{*}{brp}  \rule{0pt}{3.0ex}
    &     (256,5)     &    $9.99 \cdot 10^{-2}$ & $7.02 \cdot 10^{-1}$ & $1.60 \cdot 10^{-2}$ & $7.77 \cdot 10^{-2}$ & $1.12 \cdot 10^{-6}$ & $3.55 \cdot 10^{-6}$ & 1761.45 \\
  & &     (16,5)      &  $2.72 \cdot 10^{-1} $ & $ 1.97 \cdot 10^{-1}$ & $ 1.14 \cdot 10^{-1} $ & $ 3.36 \cdot 10^{-2} $ & $ 5.52 \cdot 10^{-6} $ & $ 1.80 \cdot 10^{-8} $ & 39.76 
    \\
  & &     (32,5)      &   $4.01 \cdot 10^{-1} $ & $ 2.95 \cdot 10^{-1}$ & $ 1.39 \cdot 10^{-1} $ & $ 7.76 \cdot 10^{-2} $ & $ 1.24 \cdot 10^{-6} $ & $ 2.63 \cdot 10^{-6} $ & 78.17 \\
\cline{2-11}
  & \multirow{2}{*}{crowds} 
  \rule{0pt}{3.0ex} 
  &     (10,5)      & $2.57 \cdot 10^{-1}$ & $3.72 \cdot 10^{-1}$ & $1.65 \cdot 10^{-1}$ & $1.16 \cdot 10^{-1}$ & $9.33 \cdot 10^{-7}$ & $8.22 \cdot 10^{-4}$    & 0.19  \\
  & &     (20,7)      & $4.18 \cdot 10^{-1}$ & $1.38 \cdot 10^{-1}$ & $2.41 \cdot 10^{-1}$ & $9.48 \cdot 10^{-2}$ & $5.81 \cdot 10^{-5}$ & $2.83 \cdot 10^{-5}$ & 0.45 \\
\cline{2-11}
  & \multirow{2}{*}{nand} 
    \rule{0pt}{3.0ex}   
  &     (10,5)      &   $3.48 \cdot 10^{-1}$ & $2.95 \cdot 10^{-1}$ & $3.64 \cdot 10^{-2}$ & $3.41 \cdot 10^{-1}$ & $2.64 \cdot 10^{-9}$ & $1.48 \cdot 10^{-4}$ & 144.26   \\
  & &     (25,5)      & $4.42 \cdot 10^{-1}$ & $3.71 \cdot 10^{-1}$ & $4.12 \cdot 10^{-2}$ & $3.78 \cdot 10^{-1}$ & $3.49 \cdot 10^{-6}$ & $2.91 \cdot 10^{-4}$  & 5327.82 \\ 
 \cline{2-11}
  & \multirow{2}{*}{consensus} 
    \rule{0pt}{3.0ex} 
  &      (2,2)      &   $3.38 \cdot 10^{-1}$ & $3.56 \cdot 10^{-1}$ & $1.32 \cdot 10^{-1}$ & $1.32 \cdot 10^{-1}$ & $5.67 \cdot 10^{-7}$ & $8.37 \cdot 10^{-4}$ & 0.72  \\
  & &      (4,2)      & $1.79 \cdot 10^{-1}$ & $1.41 \cdot 10^{-1}$ & $6.51 \cdot 10^{-2}$ & $4.75 \cdot 10^{-3}$ & $4.26 \cdot 10^{-5}$ & $9.29 \cdot 10^{-8}$ & 300.21    \\
\cline{2-11}
\hline& 
\end{tabular}
}

\label{table:table_confidence}

\end{table}
}
Table~\ref{table:table_confidence} shows the confidence probability $\alpha_\nu$ for each benchmark to satisfy and falsify the specification after $100, 1\,000$ and $10\,000$ samples from the parameter space.
In particular, for each number of samples, we report the average $\alpha_\nu$ after running 10 full iterations of the same benchmark.
Furthermore, we list the time to solve $1\,000$ samples for each instance (Time (s)).


The results in Table~\ref{table:table_confidence} show that for some benchmarks we get a high confidence probability already after $1\,000$ samples.
For other benchmarks, the confidence probability is still considerably low, for instance considering nand and falsifying the specification.
After $10\,000$ samples, we get a very high confidence in the satisfaction probability for all benchmarks.
These results demonstrate that we can efficiently compute a high confidence in the satisfaction probability.
In particular, for the same number of samples, the obtained confidence probabilities are consistent for varying number of states and parameters of the underlying models.
Therefore, no dependence on the size of models is shown (see Remark~\ref{remark:independent}).


\begin{figure}[t]
\centering
\scalebox{0.84}{
\input{UAV-plan}
}
\caption{An example of a 3D UAV benchmark with obstacles and a target area.}
\label{fig:drone}
\end{figure}

\subsection{UAV Motion Planning}
In our second benchmark, we consider the previously mentioned UAV motion planning example to model a realistic problem with a high number of random parameters.  
We model the problem as a uMDP, where the parameters represent how the weather conditions affect the movement of the UAV, and how the weather may change.
In particular, different wind conditions induce specific satisfaction probabilities.
We assume that the planning area is a certain valley where we have historic weather data which provide distributions over parameter values. 
The mission of the UAV is to transport a payload to a specific location and return safely to its initial position. 
The problem is to compute the satisfaction probability, that is, the probability that for any sampled MDP for this scenario we are able to synthesize a UAV policy that satisfies the specification.

We model the problem as follows:
States encode the position of the UAV, the current weather situation, and the general wind direction in the valley. 
Parameters describe how the weather affects the position of the UAV for different zones in the valley, and how the weather/wind may change during the day. 
Fig.~\ref{fig:drone} shows an example environment with zones to avoid (red) and a target zone (green).
We define four different weather conditions that each induce certain probability distributions over the eight different wind directions. 
The parameters of the model determine the probabilities of transitioning between different weather and wind conditions at each time step.
The specification is to reach the target zone safely with a probability of at least $0.9$. 
The number of states in our example is 266\,880, and the number of parameters is 2\,500.

For the distributions over parameter values, that is, over weather conditions, we consider the following cases.
First, we assume a uniform distribution over the different weather conditions in each zone. 
Second, the probability for a weather condition inducing a wind direction that pushes the UAV into the positive $y$-direction is five times more likely than others. 
Similarly, in the third case, it is five times more likely to push the UAV into the negative $x$-direction.
We depict some example trajectories of the UAV for three different conditions in Fig.~\ref{fig:drone}.
The trajectory given by the blue dashed line represents the expected trajectory for the first case, taking a  direct route to reach the target area.
Similarly, the trajectories given by the black dotted and solid green lines represent the expected trajectories for the second and third cases.
For the second case, we observe that the UAV tries to avoid to get closer to the obstacles in $x$ direction as the wind may push the UAV to the obstacles.
For the third case, the UAV avoids the obstacle at the bottom and then reaches the target area.

We sample $1\,000$ parameters for each case and approximate the maximal satisfaction probability with a confidence probability of at least $1-\alpha_\nu$, with $\alpha_\nu=10^{-6}$. 
The highest satisfaction probability is given by the first weather condition with $0.86$, and the other conditions have a satisfaction probability of $0.78$ and $0.75$, showing that it may be harder to navigate around the obstacles with non-uniform probability distributions.
The average time to compute the satisfaction probabilities is $1\,341$ seconds.

Finally, we introduce costs to a 2-dimensional example, where hitting an obstacle causes (1) a cost of $100$ and (2) the UAV to return to the initial position.
Specifically, we introduce cost parameters for transitions that steer the UAV towards $x$ or $y$-directions.
We minimize the maximal possible expected cost (under all parameter values) to reach the target location.
The specification asserts that the resulting expected cost should be less than $20$.

We uniformly sample $1\,000$ parameter values for weather conditions and note that the UAV policies favor on average transitioning to $y$-direction more compared to the $x$-direction to minimize the cost while ensuring that the probability of hitting an obstacle is minimized.
The average expected cost of the induced MDPs is $7.41$ and the satisfaction probability is $0.71$.
The solving time for this example is $2\,274$ seconds.
\section{Conclusion}
We presented a new sampling-based approach to uncertain Markov models. 
Theoretically, we showed how to effectively and efficiently approximate the probability that any randomly drawn sample satisfies a temporal logic specification. 
Furthermore, we showed the computational tractability of our approaches by means of well-known benchmarks and a new, dedicated case study. 

In the future, we plan to exploit our approaches for more involved models such as parametric extensions to continuous-time Markov chains~\cite{DBLP:journals/tse/BaierHHK03} or Markov automata~\cite{DBLP:journals/eceasst/HatefiH12}. 
Another line of future work will be a closer integration with a parameter synthesis framework.

%

%
%
%
%
%
%
%

\newpage
\bibliographystyle{plainyr}
\bibliography{literature}



\vfill

{\small\medskip\noindent{\bf Open Access} This chapter is licensed under the terms of the Creative Commons\break Attribution 4.0 International License (\url{http://creativecommons.org/licenses/by/4.0/}), which permits use, sharing, adaptation, distribution and reproduction in any medium or format, as long as you give appropriate credit to the original author(s) and the source, provide a link to the Creative Commons license and indicate if changes were made.}

{\small \spaceskip .28em plus .1em minus .1em The images or other third party material in this chapter are included in the chapter's Creative Commons license, unless indicated otherwise in a credit line to the material.~If material is not included in the chapter's Creative Commons license and your intended\break use is not permitted by statutory regulation or exceeds the permitted use, you will need to obtain permission directly from the copyright holder.}

\medskip\noindent\includegraphics{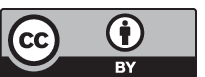}

\end{document}